\definecolor{boxgrey}{HTML}{F3F3F3}
\newlist{enumerateinline}{enumerate*}{1}
\setlist[enumerateinline]{
    label=(\roman*)
    ,productjoin={{, }}
    ,productjoin*={{, and }}
    ,after=\unskip{. }
}
\newlist{enumerateinlineempty}{enumerate*}{1}
\setlist[enumerateinlineempty]{
    label=(\roman*)
    ,productjoin={{; }}
    ,productjoin*={{ and }}
}
  \providecommand\BibTeX{{
    \normalfont B\kern-0.5em{\scshape i\kern-0.25em b}\kern-0.8em\TeX}}}
\begin{document}
\fancyhead{}

\title[Regulating Group Exposure for Item Providers in Recommendation]{Regulating Group Exposure for Item Providers \\ in Recommendation}

\author{Mirko Marras}
\orcid{0000-0003-1989-6057}
\affiliation{
  \institution{University of Cagliari}
  \city{Cagliari}
  \country{Italy}
}
\email{mirko.marras@acm.org}

\author{Ludovico Boratto}
\orcid{0000-0002-6053-3015}
\affiliation{
  \institution{University of Cagliari}
  \city{Cagliari}
  \country{Italy}
}
\email{ludovico.boratto@acm.org}

\author{Guilherme Ramos}
\orcid{0000-0001-6104-8444}
\affiliation{
  \institution{LASIGE, Faculdade de Ci\^encias, University of Lisbon}
  \city{Lisbon}
  \country{Portugal}
}
\email{ghramos@fc.ul.pt}

\author{Gianni Fenu}
\orcid{0000-0003-4668-2476}
\affiliation{
  \institution{University of Cagliari}
  \city{Cagliari}
  \country{Italy}
}
\email{fenu@unica.it}

\renewcommand{\shortauthors}{Marras, et al.}

\begin{abstract}
Engaging all content providers, including newcomers or minority demographic groups, is crucial for online platforms to keep growing and working. Hence, while building recommendation services, the interests of those providers should be valued. 
In this paper, we consider providers as grouped based on a common characteristic in settings in which certain provider groups have low representation of items in the catalog and, thus, in the user interactions. Then, we envision a scenario wherein platform owners seek to control the degree of exposure to such groups in the recommendation process. To support this scenario, we rely on disparate exposure measures that characterize the gap between the share of recommendations given to groups and the target level of exposure pursued by the platform owners. We then propose a re-ranking procedure that ensures desired levels of exposure are met. Experiments show that, while supporting certain groups of providers by rendering them with the target exposure, beyond-accuracy objectives experience significant gains with negligible impact in recommendation utility. 
\end{abstract}

\begin{CCSXML}
 <ccs2012>
    <concept>
    <concept_id>10010405.10010455</concept_id>
    <concept_desc>Applied computing~Law, social and behavioral sciences</concept_desc>
    <concept_significance>300</concept_significance>
    </concept>
    <concept>
    <concept_id>10002951.10003317.10003347.10003350</concept_id>
    <concept_desc>Information systems~Recommender systems</concept_desc>
    <concept_significance>500</concept_significance>
    </concept>
 </ccs2012>
\end{CCSXML}

\ccsdesc[500]{Information systems~Recommender systems}
\ccsdesc[500]{Applied computing~Law, social and behavioral sciences}

\keywords{\noindent Recommender Systems, Collaborative Filtering, Fairness, Ranking.}

\maketitle

\section{Introduction}

\noindent {\bf Motivation.} Online platforms are facilitating interactions among multiple parties, usually consumers and providers. These environments often rely on recommender systems that use predicted relevance to match consumers to providers~\cite{RicciRS15}. Optimizing recommendations only for the former party has been seen, for years, as the final goal. Nevertheless, recommender systems are multi-sided and should carefully consider the interests and needs of providers other than those of consumers~\cite{Burke17,marras2021equality}. One aspect receiving attention is that certain groups of providers often have little representation in the past interactions and, hence, low exposure in the recommendations. For instance, movies directed by women in ML-1M~\cite{harper2016movielens} are a minority of $10\%$ in the catalog, and their representation drops to $7\%$ in the interactions. With a pairwise recommender~\cite{song2018neural}, movies from female directors get exposure of 5\%. Hence, the recommender system may propagate providers' disparate exposure. 

\vspace{2mm} \noindent {\bf Problem Statement.} Mitigating these disparities should not only be pursued by the law if they involve legally-protected classes of providers but also could come from the platform’s business model. For instance, a platform such as Udemy may wish to ensure that courses of new teachers get a given amount of recommendations, though they have fewer learners than courses from established teachers. Similarly, platforms such as Kickstarter may desire to support new projects or projects in sensitive domains and provide them with more exposure that may enable backers to support such projects. Ensuring a certain degree of exposure to all provider groups relevant for the platform may also encourage vivid dynamics, leading to beyond-accuracy benefits for the platform and its stakeholders (e.g., novelty and diversity)~\cite{kaminskas2016diversity}. Supporting provider groups, especially minorities, is hence of primary importance. 

\vspace{2mm} \noindent {\bf Open Issues.} In the literature, there exist several recommendation procedures that re-rank the outputs of the original recommender or ranking system to meet certain diversification-related ranking objectives. These procedures do not often account for the position bias in the ranked outputs and just focus on their visibility (i.e., the fraction of a given provider group's items present in the top ranking prefix~\cite{DBLP:conf/cikm/ZehlikeB0HMB17,DBLP:conf/ssdbm/YangS17}). Other procedures, often proposed under a fairness framework, aim to ensure equitable exposure to groups of providers based on equity-driven objectives but do not avoid situations where items of minority groups get ranked at the bottom of the top ranking prefix~\cite{DBLP:conf/ecir/GomezBS21,DBLP:conf/www/Zehlike020,DBLP:conf/kdd/SinghJ18,DBLP:conf/sigir/BiegaGW18}. 

\vspace{2mm} \noindent {\bf State of the Art.} Our approach in this paper differs from the prior work in three major ways. First, compared to diversification-related fair ranking works, our approach accounts for the position bias in the ranked outputs~\cite{DBLP:conf/cikm/ZehlikeB0HMB17,DBLP:conf/ssdbm/YangS17}. Second, in contrast to approaches similar to~\cite{DBLP:conf/sigir/BiegaGW18}, our approach does not require rank-aware quality metrics that are inherently not available to the recommender system at the moment of optimization. Moreover, our approach is based on maximum marginal relevance, with a combinatorial space linear in the number of items. Other methods often rely on an integer linear program that operates on a combinatorial space quadratic in the number of items, requiring assumptions on the pre-filtering selection of items. Third, compared to~\cite{DBLP:conf/ecir/GomezBS21}, our approach can achieve the desired balancing goal on the ranking of each user without any knowledge of all other users, and is parametrized on the target exposure the platform owners seek for those provider groups, according to their policies. Our approach can be applied to the output of any recommender system, in contrast to in-processing approaches~\cite{DBLP:conf/www/Zehlike020}. 

\vspace{2mm} \noindent {\bf Our Contributions.} In this paper, we consider a scenario where providers are grouped based on a common characteristic and certain provider groups have a low representation of items in the catalog and, then, in the recommendations. We then envision that platform owners seek to guarantee a certain degree of exposure to all provider groups, including those minorities, while recommending. Under this scenario, we provide the following contributions:

\begin{itemize}[leftmargin=*]
\item We propose a post-processing approach to ensure a given degree of exposure to providers' groups by optimizing the maximum marginal relevance for the Hellinger distance between the target and the achieved exposure distributions.
\item We evaluate our approach against state-of-the-art baselines to assess how supporting all provider groups under a given recommendation policy pursued by the platform owners impacts accuracy and beyond-accuracy aspects under two datasets.
\end{itemize}

\section{Providers' Exposure Framework}
\label{sec:minority-support}
In this section, we define the notation, group disparity metrics, and a new re-ranking approach to adjust exposure among providers. 

\subsection{Preliminaries}
Given a set of users $U$ and a set of items $I$, we assume that users have expressed their interest in a subset of items in $I$. The collected feedback from observed user-item interactions can be abstracted to a set of ($u$, $i$) pairs, implicitly obtained from natural user activity, which we shortly denote by $R_{u,i}$. We denote the user-item feedback matrix $R \in \mathbb{R}^{M*N}$ as by $R_{u,i}=1$ to indicate user $u$ interacted with item $i$, and $R_{u,i}=0$ otherwise. Furthermore, we denote the set of items user $u\in U$ interacted with by $I_u=\{i\in I\,:\,R_{u,i}=1\}$.  

We consider an attribute that characterizes providers (e.g., gender) and can assume a value from a set $A$ (e.g., $A=\{male,female\}$). We define a function $\mathcal{F}: I \longrightarrow A$ that returns the value of the attribute for a provider of a given item. For each item $i$, if the provider of $i$ has the attribute $a$, we consider that $\mathcal{F}(i) =a$. The set of items whose providers have attribute $a\in A$ is denoted by $I^a=\{i\in I\,:\,\mathcal{F}(i)=a\}$, and the set of items user $u$ interacted with and come from a provider with attribute $a$ is denoted by $I_u^a=I_u\cap I^a$. 

We assume that each user $u \in U$ and item $i \in I$ is internally represented by a $D$-sized numerical vector from a user-vector matrix $W$ and an item-vector matrix $X$, respectively. The recommender system's task is to optimize $\theta = (W,X)$ for predicting unobserved user-item relevance. It can be abstracted as learning $\widetilde{R}_{u,i} = f_{\theta}(u,i)$, where $\widetilde{R}_{u,i}$ denotes the predicted relevance, $\theta$ denotes learnt user and item matrices, and $f$ denotes the function predicting the relevance of item $i$ for user $u$. Given a user $u$, items $i \in I \setminus I_u$ are ranked by decreasing $\widetilde{R}_{u,i}$, and top-$k$, with $k\in\mathbb N$ and $k>0$, items are recommended. For conciseness, our study will focus on $k=10$. Finally, let us define a function $\mathcal{G}$, such that $\mathcal{G}(u,i|k)=p$ if item $i$ is recommended to user $u$ at position $p$. 

\subsection{Group Disparity Formulation}\label{sec:measure_cal} 
We formalize disparate exposure as the distance between the degree of exposure received by providers' groups in recommendations and the degree of exposure targeted for each of them by platform owners, according to a given recommendation policy. The higher the dissimilarity is, the higher the group's disparate exposure. 

We resorted to such a global notion of disparate exposure locally on each ranked list, so that it can be optimized via a re-ranking function. For the ranked list of a user $u$, we assume that platform owners seek to ensure a targeted degree of exposure $p$ for items $i \in I$ whose providers have the attribute $a$, formalized as: 

\begin{equation}
    p_u(a)\in[0,1]\,\text{ and }\,\sum_{a\in A}p_u(a)=1
\end{equation}

\noindent When a recommender system is used to suggest the top-$k$ items to user $u \in U$, we define the degree of exposure $q$ achieved in the recommendations by items $i \in I$ whose providers have attribute $a$:
\begin{equation*}
\vspace{-1mm}
	q_u(a|k)=\frac{\sum_{p=1}^{k}  1 / log_2(p+1) \cdot \mathds{1}(\mathcal{F}(i_p) = a)}{\sum_{p=1}^{k}  1 / log_2(p+1)}\,\text{ and }\,\sum_{a\in A}q_u(a|k)=1, 
\end{equation*}
where $i_p$ is the item recommended to $u$ at position $p$; $q_u(a|k)$ ranges in $[0, 1]$, being 0 when there is no exposure for items of the group $I_a$ in the top-$k$. Positive values indicate the share of exposure for items of the group $I_a$ at the prefix $k$ of the ranking. 

For the ranked list of a user $u$, the degree of exposure targeted by platform owners for each provider group is pursued if the vectors $p_u(k)$ and $q_u(k)$ are statistically indistinguishable. This may be sorted out by doing a statistical test of the hypothesis, such as the \textit{Kolmogorov–Smirnov} test, one of the most computationally efficient asymptotic tests. Nevertheless, such statistical tests lead to strictly combinatorial optimization problems. Hence, to support the exposure goals targeted by platform owners, we approximately and directly compare the vectors $p_u(k)$ and $q_u(k)$ to quantify the similarity between them. Though the \textit{Kullback–Leibler} divergence is commonly used to compare such two distributions, the fact that it is non-symmetric and unbounded might lead to low interpretability and computational instability. Another popular solution that we leave for future work is the L1 distance, which however is less intuitive and more likely to give high distance values. For these reasons, as a support metric in our study, we use the \textit{Hellinger} distance, which is both symmetric and bounded in the range [0,1]. Specifically, the support metric $H(p,q|k)$ we consider is defined as: 

\begin{equation}
\begin{split}
H(p_u, q_u|k) = \frac{1}{\sqrt 2}\sqrt{\sum_{a\in A}\left(\sqrt{p_u(a)}-\sqrt{ q_u(a|k)}\right)^2}
\end{split}
\label{eq:hellinger}
\end{equation}

\noindent where $p_u = (p_u(a_1),\ldots,p_u(a_m))$ and $q_u = (q_u^e(a_1|k),\ldots, q_u^e(a_i|k))$, within the attribute $A=\{a_1,\ldots, a_m\}$. Hence, $H(p_u,q_u|k)=0$ if $p_u(k)$ and $q_u(k)$ are \emph{perfectly balanced}, meaning that the exposure levels pursued by platform owners are met. Conversely, the maximum distance $1$ is achieved when $p_u(k)$ assigns 0 to every attribute that $q_u(k)$ assigns a positive value (or vice versa), so that the distributions are \emph{completely unbalanced}. In the latter situation, the recommender suggests items by rendering certain degrees of group exposure in the opposite direction to the platform owners' goals.   

\subsection{Disparity Control Procedure}\label{sec:cal_frame} 
To achieve the degree of exposure pursued by platform owners for each provider group, we introduce a recommendation procedure that seeks to minimize the support metric provided in Eq.~\eqref{eq:hellinger}. Since, in general, it is hard to plug in the balancing phase inside a recommender system, we propose to balance the output obtained by a recommender system through a re-ranking of the recommended list that it returns, a common practice in the recommender system literature~\cite{RicciRS15}. 
For each user $u\in U$, our goal is to determine an optimal set $\mathcal I^\ast$ of $k$ items to be recommended to $u$, so that the targeted degree of exposure $p$ is met. To this end, we adopt a \emph{maximum marginal relevance}~\cite{carbonell1998use} approach, with Eq.~\eqref{eq:hellinger} as the support metric. 
The set $\mathcal I^\ast$ is obtained by solving the optimization problem: 

\begin{equation}\label{eq:opt_prob}
    \mathcal I^\ast(u|k) = \mathop{\text{argmax}}_{\mathcal I\subset I,|\mathcal I|=k}\,\lambda \sum_{i\in \mathcal I}\widetilde R_{ui}-(1-\lambda)\,H^2 (p_u,{q_u}_{|\mathcal I})
\end{equation}

\noindent where ${q_u}_{|\mathcal I}$ is $q_u$ when the top-$k$ list includes items $\mathcal I$, and $\lambda\in[0,1]$ expresses the trade-off between accuracy and disparate exposure. With $\lambda=1$, we simply yield the output of the recommender system, not taking disparate exposure into account. Conversely, with $\lambda =0$, the output of the recommender system is discarded, and we only focus on controlling disparate exposure. This combinatorial maximization problem may be efficiently approximated with a greedy approach with $(1-1/e)$ optimality guarantees, if the objective function of the maximization is submodular~\cite{nemhauser1978analysis}. 

\begin{proposition}

    The objective function to be maximized in Eq.~\eqref{eq:opt_prob}, i.e., $f(\mathcal I)=\lambda\textstyle  \sum_{i\in\mathcal I}\widetilde R_{ui}-(1-\lambda)\,H^2(p_u,{q_u}_{|\mathcal I})$, is submodular. 

\end{proposition}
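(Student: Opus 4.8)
The plan is to use the closure properties of the class of submodular set functions: it is closed under nonnegative linear combinations and under adding a modular (linear) function. Accordingly I would split the objective as $f(\mathcal I)=f_1(\mathcal I)-(1-\lambda)\,f_2(\mathcal I)$ with $f_1(\mathcal I)=\lambda\sum_{i\in\mathcal I}\widetilde R_{ui}$ and $f_2(\mathcal I)=H^2(p_u,{q_u}_{|\mathcal I})$. The term $f_1$ is modular — hence both submodular and supermodular, whatever the signs of the predicted relevances — so it cannot spoil submodularity; and since $1-\lambda\ge 0$, it suffices to prove that $-f_2$ is submodular, i.e., that $f_2$ is \emph{supermodular} as a function of $\mathcal I$ (equivalently, that $f_2$ has \emph{increasing} marginal gains).

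Next I would make ${q_u}_{|\mathcal I}$ explicit as a set function. Writing $w_i>0$ for the logarithmic discount $1/\log_2(p+1)$ of the slot $p$ that item $i$ occupies, and $Z=\sum_{p=1}^{k}1/\log_2(p+1)$, we have $q_u(a\,|\,\mathcal I)=m_a(\mathcal I)/Z$ with $m_a(\mathcal I)=\sum_{i\in\mathcal I\cap I^a}w_i$, a nonnegative, monotone, \emph{modular} function of $\mathcal I$. Expanding the square in Eq.~\eqref{eq:hellinger}, using $\sum_{a\in A}p_u(a)=1$ and the fact that $\{I^a\}_{a\in A}$ partitions $I$ (so $\sum_{a}m_a(\mathcal I)=\sum_{i\in\mathcal I}w_i$), I would rewrite
\begin{equation*}
H^2(p_u,{q_u}_{|\mathcal I})=\tfrac12+\tfrac1{2Z}\sum_{i\in\mathcal I}w_i-\tfrac1{\sqrt Z}\sum_{a\in A}\sqrt{p_u(a)}\,\sqrt{m_a(\mathcal I)}.
\end{equation*}
Here the first summand is constant and the second is modular in $\mathcal I$, so the supermodularity of $f_2$ reduces to showing that $\Phi(\mathcal I):=\sum_{a\in A}\sqrt{p_u(a)}\,\sqrt{m_a(\mathcal I)}$ is \emph{submodular}.

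The crux is the elementary fact that each $\mathcal I\mapsto\sqrt{m_a(\mathcal I)}$ is submodular. Indeed $m_a$ is nonnegative and modular, hence $m_a(S)\le m_a(T)$ whenever $S\subseteq T$ (all $w_i>0$); and for a concave function the increment over an interval of fixed length $w_x$ is nonincreasing in the left endpoint, so for $S\subseteq T$ and $x\notin T$,
\begin{equation*}
\sqrt{m_a(S)+w_x}-\sqrt{m_a(S)}\ \ge\ \sqrt{m_a(T)+w_x}-\sqrt{m_a(T)},
\end{equation*}
which is exactly the diminishing-returns inequality. Since the coefficients $\sqrt{p_u(a)}$ are nonnegative and submodularity is preserved under nonnegative combinations, $\Phi$ is submodular; consequently $f_2$ equals a constant plus a modular term minus a submodular term, hence is supermodular, so $-(1-\lambda)\,f_2$ is submodular and, adding the modular $f_1$, $f$ is submodular. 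The step I expect to need the most care is the bookkeeping of the discounts $w_i$: the slot an item lands in, and hence its weight, a priori depends on which other items are selected, so one must either treat $w_i$ as an item-intrinsic constant fixed by the greedy pass (the $t$-th chosen item taking slot $t$), as done above, or verify that re-sorting the already-chosen items by relevance leaves the displayed inequalities intact; once the weights are fixed per item, the modular-plus-concave-composition structure delivers the result immediately.
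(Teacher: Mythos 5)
Your proposal is correct, and it takes a genuinely different --- and in fact more careful --- route than the paper. The paper expands $H^2(p_u,{q_u}_{|\mathcal I}) = 1-\sum_{a\in A}\sqrt{p_u(a)\,{q_u}_{|\mathcal I}(a)}$ (using $\sum_a {q_u}_{|\mathcal I}(a)=1$, which strictly holds only for full-size lists) and then asserts that the Bhattacharyya term $\sum_{a}\sqrt{p_u(a)\,{q_u}_{|\mathcal I}(a)}$ is \emph{modular} ``since it is positive''; positivity does not imply modularity, and the term is in general not modular, since $\sqrt{m_a(\mathcal I)}$ has strictly diminishing increments. You instead prove the weaker but sufficient property: each $\mathcal I\mapsto\sqrt{m_a(\mathcal I)}$ is submodular as a concave function of a nonnegative monotone modular function, and a nonnegative combination of such terms stays submodular, so $-H^2$ is a constant plus a modular term plus a submodular term, hence submodular. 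This is the standard concave-composition lemma and it repairs the paper's argument rather than merely paralleling it; the final conclusion survives because submodularity of the sum is all that the $(1-1/e)$ greedy guarantee requires. Your handling of the normalization (keeping $\sum_a {q_u}_{|\mathcal I}(a)=\sum_{i\in\mathcal I}w_i/Z$ as an explicit modular term rather than setting it to $1$) is the right way to make the function well defined on subsets of size less than $k$. The one caveat you flag --- that the discount $w_i$ of an item depends on the slot it occupies, which depends on the rest of $\mathcal I$ --- is a genuine gap in \emph{both} arguments, and your proposed fix (fixing $w$ per greedy step, or per item) is what is needed to make either proof airtight; the paper does not address it at all.
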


\begin{proof}
     First, we present the submodularity proof of the relevance part of the objective (left term in the equation) for the elements of the set $\mathcal I$. Since $\widetilde R_{ui} > 0$, it follows that $\sum_{i\in\mathcal I}\widetilde R_{ui}$ is a modular function and, hence, submodular. Second, we present the submodularity proof of the disparity objective for the elements of the set $\mathcal I$ over the set $\mathcal A$. To demonstrate it, we notice that: 
    
    \begin{equation*}
    \resizebox{0.475\textwidth}{!}{%
        $
        \begin{split}
        H^2(p_u,q_{u_{|\mathcal I}})&=
        \left(\frac{1}{\sqrt 2}\sqrt{\sum_{a\in A}\left(\sqrt{p_u(a)}-\sqrt{ q_{u_{|\mathcal I}}(a)}\right)^2}\right)^2\\
        & = \frac{1}{2}\left(\sum_{a\in A} p_u(a)+q_{u_{|\mathcal I}}(a)-2\sqrt{p_u(a)q_{u_{|\mathcal I}}(a)}\right)
        \\ &
        = \frac{1}{2}\left(1+1-2\sum_{a\in A}\sqrt{p_u(a)q_{u_{|\mathcal I}}(a)}\right)
        =1-\sum_{a\in A}\sqrt{p_u(a)q_{u_{|\mathcal I}}(a)}\\
    \end{split}
    $
    }
    \end{equation*}
    
    \noindent Given the above result and replaced it into Eq.~\eqref{eq:opt_prob}, we obtain: 
    
    \begin{equation*}
        \resizebox{0.475\textwidth}{!}{%
        $
        \displaystyle\mathop{\text{argmax}}_{\mathcal I\subset I,|\mathcal I|=k}f(\mathcal I) = \mathop{\text{argmax}}_{\mathcal I\subset I,|\mathcal I|=k}\,\lambda \left(\sum_{i\in\mathcal I}\widetilde R_{ui}\right)+(1-\lambda)\left(\sum_{a\in A}\sqrt{p_u(a){q_u}_{|\mathcal I}(a)} - 1\right)
        $
        }
    \end{equation*}

    \noindent Since $\sqrt{p_u(a){q_u}_{|\mathcal I}(a)} > 0$, $\displaystyle \sum_{a\in A}\sqrt{p_u(a){q_u}_{|\mathcal I}(a)} - 1$ is modular\footnote{If a function $f$ is submodular, adding or subtracting a constant to $f$ does not change the submodularity property.}.  
    
     \noindent Therefore, the sum of modular functions is a modular function, and a modular function is also submodular, as we aimed to show. 
\end{proof}

This greedy approach yields an ordered list of items, and the resulting list at each step is $(1 - 1/e)$ optimal among the lists of equal size. This property fits with the real world, where users may initially see only the first $k < N$ recommendations, and the remaining items may become visible after scrolling. Our approach allows us also to control the exposure of multiple provider groups in the ranked lists, and it does not pose any constraint on the size of $A$. 

\subsection{Minority Recommendation Policies}

The share of recommendations given to provider groups in terms of exposure might not only be controlled by platform owners according to the law if those minorities involve legally-protected classes of providers but could also depend on the platform’s business model. Our study in this paper focuses on five recommendation policies that could be pursued by platform owners while recommending. Please, note that our disparity control approach is completely agnostic to the underlying recommendation policy. Therefore, it could be extended to any other policy that relies on a distribution of the share of recommendations. Specifically:    

\begin{itemize}[noitemsep,topsep=0pt,leftmargin=*]  
\item The \texttt{Cat} policy aims to ensure that a provider group has an exposure proportional to its representation in the catalog, i.e., $\lambda < 1$ and $p_u(a) = |I^a| / |I|$ in Eq.~\eqref{eq:opt_prob}. This policy follows a distributive norm based on equity among providers' groups~\cite{DBLP:book/Walster1973new}.   
\item The \texttt{Int} policy aims to ensure that each provider group has an exposure proportional to its representation in the interactions, i.e., $\lambda < 1$ and $p_u(a) = \sum_{u,i} R_{u,i}, \forall i \in I^a / \sum_{u,i} R_{u,i}$ in Eq.~\eqref{eq:opt_prob}. This policy aims to ensure that no distortion in recommendations is added with respect to the degree of interaction with each group.  
\item The \texttt{Par} policy aims to ensure that provider groups have the same degree of exposure among each other, i.e., $\lambda < 1$ and $p_u(a) = 1 / |A|$ in Eq.~\eqref{eq:opt_prob}. This policy follows a distributive norm based on statistical parity among providers' groups~\cite{zemel2013learning}.    
\item The \texttt{Per} policy aims to ensure that each provider group has an exposure proportional to the its representation on the profile of the current user, i.e., $\lambda < 1$ and $p_u(a) = |I^a_u| / |I_u|$ in Eq.~\eqref{eq:opt_prob}. This policy subsumes the \texttt{Int} policy, but it calibrates the share of recommendations for a user according to the individual user's preferences, not to the global degree of interaction with a group.
\end{itemize}

\section{Experimental Evaluation}
\label{sec:experimental-evaluation}
In this section, we aim to answer two research questions: 

\begin{enumerate}[label=\textbf{RQ\arabic*},leftmargin=10mm]
\item \emph{Can our re-ranking approach achieve a better trade-off between recommendation utility and disparate provider group exposure, compared to state-of-the-art baselines?} 
\item \emph{Does our re-ranking approach provide benefits pertaining to relevant beyond-accuracy objectives for the platform, such as novelty and diversity?} 
\end{enumerate}

\noindent \textbf{Data}. Our case study in this paper assumes that providers are grouped based on their gender, so $A=\{ \text{female}, \text{male}\}$. Hence, we used two public datasets that, to the best of our knowledge, are among the few including providers' gender\footnote{We used the providers' sensitive attribute labels provided by \cite{boratto2021interplay}.}. \textit{ML-1M}~\cite{harper2016movielens} includes $1M$ interactions performed by $6K$ users towards $3K$ movies. Considering movie directors as providers, the representation of movies from female directors is $10\%$ in the catalog and $7\%$ in the interactions. \textit{COCO}~\cite{dessi2018coco} includes $37K$ learners who produced $600K$ interactions with $30K$ online courses. Considering course teachers as providers, the representation of courses from female teachers is around $19\%$ in the catalog and $12\%$ in the interactions. Due to their representation, we considered female directors and teachers as minorities. 

\vspace{2mm} \noindent \textbf{Recommender Details}. We investigated the impact of our approach on a pairwise recommendation algorithm~\cite{song2018neural}. Pairwise learning is the foundation of many cutting-edge personalized ranking algorithms~\cite{zhang2019deep}. For each dataset, we applied a temporal train-test split, with the most recent $20\%$ of interactions for each user in the test set and the rest in the train set. Matrices were initialized with uniformly-distributed values in $[0, 1]$, and the size of the user and item latent factors was set to $16$. For $20$ epochs, the model is served with $1024$-sized batches. For each user, we created $10$ triplets per observed item; the unobserved item was selected randomly for each triplet. The optimizer was Adam with a learning rate of $0.001$.

\vspace{2mm}  \noindent \textbf{Baselines}. Our \texttt{BPR+Ours}\footnote{To set $\lambda$ in our approach, we assumed to work in a context where the platform owners are willing to lose 10\% of NDCG at most to decrease as much as possible the disparate exposure. We hence found the highest $\lambda$ able to meet this constraint. Our preliminary experiments showed that the sensitiveness to this parameter depends on the dataset.} was compared against a range of methods aimed to balance between the two objectives, namely utility and disparate provide group representation:
\begin{itemize}[leftmargin=4mm]
\item \texttt{BPR+LFRank} \cite{DBLP:conf/ssdbm/YangS17}: a re-ranking approach that aims to learn a mapping that satisfies statistical parity while preserving utility. 
\item \texttt{BPR+FA*IR} \cite{DBLP:conf/cikm/ZehlikeB0HMB17}: a re-ranking approach that aims to ensure that the proportion of protected items in every prefix of the top-$k$ ranking remains above a given minimum.
\item \texttt{BPR+FOEIR} \cite{DBLP:conf/kdd/SinghJ18}: a re-ranking approach that aims to maximize the utility for the user while satisfying a notion of fairness. 
\item \texttt{BPR+GDE} \cite{DBLP:conf/ecir/GomezBS21}: a re-ranking approach that aims to mitigate position bias, to an iterative algorithm.
\end{itemize}

\vspace{2mm} \noindent Preliminary experiments showed that the \texttt{Par} policy appears as the most challenging one. Therefore, for the sake of space, we compare against baselines only under this policy. 

\begin{table}[!b]
\vspace{4mm}
\caption{Recommendation utility ($NDCG$, the higher it is, the better), minority group exposure ($\Tilde{E}_m$; it should be as closer as possible to the target exposure $E_m$ reported between parenthesis under the Policy column), and beyond-accuracy objectives, namely coverage, diversity, novelty (the higher they are, the better) on top-10 recommended lists.}
\label{tab:table-1}
\resizebox{1.0\linewidth}{!}{
\begin{tabular}{lll|ll|lll}
\textbf{Data}                            & \textbf{Approach} & \textbf{Policy}             & \textbf{NDCG} $\uparrow$      & \textbf{$\Tilde{E}_m$} & \textbf{Cov} $\uparrow$ & \textbf{Div} $\uparrow$ & \textbf{Nov} $\uparrow$ \\
\toprule
\multirow{9}{*}{\textbf{ML-1M}} & \texttt{BPR} & -         &   \textbf{0.13}    &  0.06   &  \textbf{0.53}   &   0.22  &   0.09  \\
& \texttt{BPR+Ours} & \texttt{Cat ($E_m=0.10$)} &    \underline{0.12}    &  0.09  &  0.50   &  \underline{0.23}   &   \underline{0.11}  \\
                                &                                    & \texttt{Int ($E_m=0.07$)} &    \textbf{0.13} &  0.07   &  \underline{0.52}   &  0.22   &  0.10   \\
                                &                                    & \texttt{Per ($E_m=0.07$)} &    \textbf{0.13}    &  0.07  &   \underline{0.52}  &   \textbf{0.28} *  &  \textbf{0.12} *  \\
                                \cline{4-8}
                                &                                    & \texttt{Par ($E_m=0.50$)} &    \textbf{0.11} *    &  \textbf{0.47}  &  \underline{0.49}   &  \textbf{0.26} *  &  \textbf{0.13} *   \\
                                & \texttt{BPR+LFRank} & \texttt{Par ($E_m=0.50$)}  &   0.09 *    &  0.39  &   \textbf{0.50}  & 0.21    &   0.10  \\
                                & \texttt{BPR+FA*IR} & \texttt{Par ($E_m=0.50$)}   &   0.08 *    &  0.40  & \textbf{0.50}    & \underline{0.25} *    &   \underline{0.11}  \\
                                & \texttt{BPR+FOEIR} & \texttt{Par ($E_m=0.50$)}  &   0.07 *    &  \underline{0.55}  & 0.48    &  \textbf{0.26} *  &  \textbf{0.13} *   \\
                                & \texttt{BPR+GDE} & \texttt{Par ($E_m=0.50$)}     &  \underline{0.10} *    &  0.36  &  0.49   & 0.22     &   0.09  \\
                                \midrule
\multirow{9}{*}{\textbf{COCO}}  & \texttt{BPR} & -         & \textbf{0.05} &  0.08 &  \textbf{0.19}   &  0.25   &  0.06   \\
& \texttt{BPR+Ours} & \texttt{Cat ($E_m=0.19$)} & \underline{0.04} &  0.17  &  \underline{0.18}   &  \underline{0.27} *  &    \underline{0.09} \\
                                &                                    & \texttt{Int ($E_m=0.12$)} & \textbf{0.05} & 0.10    &  \textbf{0.19}   &  0.26   &  \underline{0.09}   \\
                                &                                    & \texttt{Per ($E_m=0.12$)} & \textbf{0.05} &  0.11  &  \textbf{0.19}   &  \textbf{0.40} *   &  \textbf{0.19} *   \\
                                \cline{4-8}
                                &                                    & \texttt{Par ($E_m=0.50$)} & \underline{0.04} &  \textbf{0.53} &  \underline{0.16}   &  \textbf{0.27} *   &  \underline{0.14} *   \\
                                & \texttt{BPR+LFRank} & \texttt{Par ($E_m=0.50$)}  & \textbf{0.05} &  0.41  &  \underline{0.16}   &  \underline{0.26}   &   0.10 *  \\
                                & \texttt{BPR+FA*IR} & \texttt{Par ($E_m=0.50$)}   & 0.02 * &  0.43  &  \textbf{0.17}   &  0.24   &   0.10 *  \\
                                & \texttt{BPR+FOEIR} & \texttt{Par ($E_m=0.50$)}  & 0.02 * & 0.56    &  0.15   &  \textbf{0.27} *   &   0.13 *  \\
                                & \texttt{BPR+GDE} & \texttt{Par ($E_m=0.50$)}     & \underline{0.04} &  \underline{0.55}  &  \underline{0.16}   &   \textbf{0.27} * &   \textbf{0.15} *  \\
                                
                                \bottomrule
    \multicolumn{8}{l}{For each data set: best result in \textbf{bold}, second-best result \underline{underlined}.}\tabularnewline
    \multicolumn{8}{l}{Statistical significance level, w.r.t. the original model \texttt{BPR}: (*) $p \le 0.05$.}\tabularnewline
\end{tabular}}
\end{table}

\vspace{2mm}  \noindent \textbf{Results and Discussion}. 
In a first analysis, we investigated whether our re-ranking approach can achieve a better trade-off between recommendation utility and disparate provider group exposure compared to state-of-the-art baselines. \tablename~\ref{tab:table-1} collects the Normalized Discounted Cumulative Gain ($NDCG$) \cite{WangWLHL13} (binary relevance scores and a base-2 logarithm decay; the higher the score is, the higher the recommendation utility) and the exposure achieved in the recommendations for the minority group ($\Tilde{E}_m$). We computed Kruskal-Wallis tests to establish the statistical significance of the differences between methods and then a series of post-hoc Dunn’s tests to establish significance at the pairwise level.

It can be observed that not all policies led to a loss in $NDCG$. In ML-1M, under the \texttt{Int} and the \texttt{Per} policies, no difference was measured with respect to the baseline \texttt{BPR}. Subsequently, \texttt{Cat} and \texttt{Par} showed a decrease ($\leq 0.02$ NDCG points) in $NDCG$. This pattern would have been expected, given that the latter two policies require applying a larger change in the provider group exposure distribution achieved by the original recommender system. For instance, for the minority group, its exposure in the original recommendations for \texttt{Cat} ($0.06$) should be moved up to $0.10$. This needed change was even higher for the \texttt{Par} policy, which aimed to reach an exposure of $0.50$ for both provider groups. Compared to the baselines, our approach \texttt{BPR+Ours} consistently achieve higher $NDCG$, even under the most challenging \texttt{Par} policy. Similar observations can be made also in COCO. In this dataset, \texttt{BPR+LFRank} and \texttt{BPR+GDE} achieved a comparable (slightly higher) $NDCG$ w.r.t. \texttt{BPR+Ours} under \texttt{Par}. 

Considering the exposure $\Tilde{E}_m$ given to the minority group (according to the target), it can be observed that our approach \texttt{BPR+Ours} was able to reach a group exposure closer to the target of a certain policy, with an often negligible loss in $NDCG$. The main advantage is that our approach has a fine-grained yet customizable control on the exact level of target exposure, that makes it more flexible especially in cases where the exposure distribution changes should be larger (\texttt{Cat} and \texttt{Par} policies). Under the \texttt{Par} policy, compared to the baselines, \texttt{BPR+Ours} led to an achieved minority group exposure closer to the target $0.5$, yielding also the highest $NDCG$ in ML-1M and the second highest $NDCG$ in COCO. In the latter dataset, however, compared to \texttt{BPR+LFRank}, our approach has a loss of $0.01$ in $NDCG$ in the face of $0.14$ of gain in $\Tilde{E}_m$. 

We can conclude that \texttt{BPR+Ours} attains a better trade-off between recommendation utility and disparate exposure compared to the considered baselines. Though the gains on disparate exposure are at the cost of slight decreases of recommendation utility, the observed loss is often negligible and not significant, making it easier to decide how to balance them (\textbf{\textsc{RQ1}}).

In a second analysis, we were interested in exploring whether our re-ranking approach provides benefits pertaining to relevant beyond-accuracy objectives for the platform in terms of coverage, diversity, and novelty. \tablename~\ref{tab:table-1} reports the Catalog Coverage ($Cov$), Category Diversity ($Div$), and Novelty ($Nov$) \cite{kaminskas2016diversity}. The higher the metric score is, the more the beyond-accuracy objective is met. 

We observe that, especially under \texttt{Par}, the catalog coverage was reduced (though \texttt{BPR+Ours} still has state-of-the-art performance compared to the baselines, in addition to a better utility-exposure trade-off). This loss in coverage might be due to the fact that all approaches tend to frequently remove items of the majority group that are rarely present in the top-10 list and recommend, in the top-10 list of more users, items of the minority group that were already covered by other users. Though this practice can lead to a better utility-exposure trade-off, the overall catalog coverage ends up being slightly reduced. Conversely, by better increasing the exposure of the minority group, \texttt{BPR+Ours} achieved better list-level beyond-accuracy objectives (category diversity and item novelty) w.r.t. the considered baselines, under \texttt{Par}. Comparing our approach across policies, \texttt{Per} led to the highest category diversity and item novelty, while the loss in catalog coverage was confirmed across all the other policies as well (though being smaller). 

We can conclude that \texttt{BPR+Ours} also better achieves beyond-accuracy objectives at a recommended list level (category diversity and item novelty), whereas it suffers from a small loss in beyond-accuracy objectives at a global level (catalog coverage) (\textbf{\textsc{RQ2}}). 

\section{Conclusions and Future Work}
\label{sec:conclusions}
In this paper, we proposed a re-ranking approach based on maximal marginal relevance that is able to minimize disparate exposure for provider groups according to a certain recommendation distribution policy. Based on the results, supporting all provider groups, especially minorities, led to a small loss in recommendation utility. The higher the difference between current and target provider group representations, the higher the loss. Furthermore, reducing inequalities in exposure among provider groups resulted in a positive impact on beyond-accuracy objectives. Overall, our approach led to a better trade-off among recommendation utility, exposure, and beyond-accuracy objectives, than the considered baselines. 

In the following steps, we plan to consider other datasets, recommender systems, domain-specific recommendation policies, and scenarios with (more than) two groups of providers (e.g., geographic- or age-based groups, unpopular provider groups). Thanks to our design choices and the resulting flexibility, our approach is fully agnostic concerning all these experimental elements. 

\balance
\bibliographystyle{ACM-Reference-Format}
\bibliography{sample-base}

\end{document}